\newif\ifshort
\def\Reals{\ensuremath{\mathbb{R}}}
\newtheorem{definition}{Definition}
\newtheorem{lemma}[definition]{Lemma}
\newtheorem{theorem}[definition]{Theorem}
\newcommand{\N}{\mathbb{N}}
\newcommand{\HH}{\mathcal{H}}
\newcommand{\depth}{\operatorname{depth}}
\newcommand{\vx}{\vec{x}}
\newcommand{\R}{\mathbb{R}}
\newcommand{\Rd}{\mathbb{R}^d}
\def\annotation#1#2{%
  \leavevmode\vbox to0pt{%
    \vss
    \cbstart
    \color{red}
    \rlap{\vrule\raise .75em%
       \hbox{\underbar{\normalfont\tiny#1 says: #2}}}\cbend}}
\newcommand{\comment}[1]{}
\renewcommand{\annotation}[2]{}
\title{The parameterized complexity of some geometric problems in unbounded dimension%
}
\author{
 Panos Giannopoulos%
  \thanks{Institut f{\"ur} Informatik, Freie Universit{\"a}t Berlin,
    Takustra{\ss}e 9, D-14195 Berlin, Germany, \{panos, knauer, rote\}@inf.fu-berlin.de.}
  \footnote{This research was
    supported by the German Science Foundation (DFG) under grant Kn~591/3-1.}
  \and
  Christian Knauer\footnotemark[1] \footnotemark[2]
  \and
  G{\"u}nter Rote\footnotemark[1]
}
\begin{document}
\date{}
\maketitle


\begin{abstract}
We study the parameterized complexity of the following fundamental 
geometric problems with respect to the dimension $d$:
\begin{itemize}
\item [i)] Given $n$ points in $\Rd$, compute their minimum enclosing cylinder. 
\item [ii)]  Given two $n$-point sets in $\Rd$, decide whether they can be separated by two hyperplanes.
\item [iii)] Given a system of $n$ linear inequalities with $d$ variables, find a maximum-size feasible subsystem.
\end{itemize}
We show that (the decision versions of) all these problems are W[1]-hard when parameterized by the dimension $d$. 
Our reductions also give a $n^{\Omega(d)}$-time lower bound (under the Exponential Time Hypothesis).

\medskip

  \noindent
  \textit{Keywords: parameterized complexity, geometric dimension, lower bounds, 
minimum enclosing cylinder, maximum feasible subsystem, 2-linear separability.}
\end{abstract}
\section{Introduction}
We study the parameterized complexity of the following three fundamental geometric problems with respect to the dimension 
of the underlying space: minimum enclosing cylinder of a set of points in $\Rd$, 
$2$-linear separation of two point sets in $R^d$, 
and maximum-size feasible subsystem of a system of linear inequalities with $d$ variables. 
All these problems are NP-hard when the dimension $d$ is unbounded 
and all known exact algorithms run in $n^{O(d)}$ time (basically, using brute force), where $n$ is the total 
number of objects in the input sets. 
As with many other geometric problems in $d$ dimensions, it is widely conjectured that the dependence on $d$ cannot be 
removed from the exponent of $n$. However, no evidence of this has been given so far.

In terms of parameterized complexity theory
the question is whether any of these problems is fixed-parameter tractable 
with respect to $d$, i.\,e., whether there exists an algorithm that runs in 
${O}(f(d)n^c)$ time, for some computable function $f$ and some constant $c$ independent of $d$. 
Proving a problem to be W[1]-hard with respect to $d$, gives a strong evidence that such an algorithm 
is not possible, under standard complexity theoretic assumptions. We summarize our results bellow.

\paragraph{Results.}
We study the following decision problems:
\begin{itemize}
\item [i)] Given $n$ unit balls $\Rd$, decide whether there is a line that stabs all the balls. 
           (Note that since the balls are unit, this is the decision version of the problem of computing 
           the minimum enclosing cylinder of a set of $n$ points.) 
\item [ii)]  Given two $n$-point sets in $\Rd$, decide whether they can be separated by two hyperplanes.
\item [iii)] Given a system of $n$ linear inequalities with $d$ variables and an integer $l$, 
             decide whether there is a solution satisfying $l$ of the inequalities.
\end{itemize}

We prove that all three problems are W[1]-hard with respect to $d$. This is done by fpt-reductions from the 
$k$-independent set (or clique) problem in general graphs, which is W[1]-complete~\cite{DF99}.  
As a side-result, we also show that, when restricted to equalities, 
problem (iii) is W[1]-hard with respect to both $l$ and $d$. 
The reductions for problems (i) and (ii) 
are based on a technique pioneered in Cabello et al.~\cite{CGKR08}, see next section. 
With the addition of these two problems this technique shows a generic trait and its potential as  
a useful tool for proving hardness of geometric problems with respect to the dimension.   

In all three reductions the dimension is linear in the size $k$ of the independent set (or clique), hence
an $n^{o(d)}$-time algorithm for any of the problems implies an $n^{o(d)}$-time algorithm for the parameterized $k$-clique  
problem, 
which in turn implies that $n$-variable $3$SAT can be solved in $2^{o(n)}$-time. 
The Exponential Time Hypothesis (ETH)~\cite{DBLP:journals/jcss/ImpagliazzoP01} conjectures that no such algorithm exists.

\paragraph{Related work.}
The dimension of geometric problems is a natural parameter for studying their parameterized complexity.
However, there are only few results of this type: 
Langerman and Morin~\cite{DBLP:journals/dcg/LangermanM05} gave fixed-parameter tractability results for 
the problem of covering points with hyperplanes, while the `dual' parameterization 
of the maximum-size feasible subsystem problem, where parameter $l$ is now the smallest number of inequalities 
one has to remove to make the system feasible 
is fixed-parameter tractable with respect to both $l$ and $d$~\cite{BCILM08}.
As for hardness results, the problems of covering points with balls and computing the volume of 
the union of axis parallel boxes have been shown to be W[1]-hard by  
Cabello et al.~\cite{CGKR08} and 
Chan~\cite{Chan08} respectively. We refer the reader to Giannopoulos et al.~\cite{DBLP:journals/cj/GiannopoulosKW08} for 
a survey on parameterized complexity results for geometric problems.

The problem of stabbing balls in $\Rd$ with one line 
was shown to be NP-hard when $d$ is part of the input by Megiddo~\cite{Megiddo90onthe}.  
This problem is equivalent to the minimum enclosing cylinder problem for points, see Varadarajan et al.~\cite{VVYZ07}. 
Exact and approximation algorithms for the latter problem can be found, for example, 
in B\u{a}doiu et al.~\cite{BHI02}.

Megiddo~\cite{Meg88} showed that the problem of separating two point sets in $\Rd$ by two hyperplanes 
is NP-hard. He also showed that the general problem of separating two point sets by $l$ hyperplanes can 
be solved in polynomial time when both $d$ and $l$ are fixed.

The complexity of the maximum-size feasible subsystem problem was studied in Amaldi and Kann~\cite{AK95}.
Several results on the hardness of approximability can also be found in this paper, as well 
as in Arora et al.~\cite{ABSS97}. For exact and approximation algorithms for this and several related
problems see Aronov and Har-Peled~\cite{AH08}. 

\section{Preliminaries}
\subsection{Methodology}
As mentioned above, 
all three hardness results use a reduction from the $k$-independent set (or clique) problem.
Using the technique in~\cite{CGKR08}, 
we construct of a \emph{scaffolding}
structure that restricts the solutions to $n^k$ combinatorially
different solutions, which can be interpreted as potential
$k$-cliques in a graph with $n$ vertices. Additional \emph{constraint} objects
will then encode the edges of the input graph.

The main ideas are the following. 
We construct geometric instances which lie in Euclidean space whose dimension depends only on $k$. 
Note that the lower the dependence on~$k$, the better the lower bound we get from the hardness result.
In our case the dependence is linear. The scaffolding structure is highly symmetric. 
It is composed of $k$ symmetric subsets of a linear (in $n$) number of objects that lie in orthogonal subspaces. 
Orthogonality together with the specific geometric properties of each problem allows us to 
restrict the solutions to $n^k$ combinatorially different solutions. The way of placing 
the constraint objects is crucial: each object lies in a $4$-dimensional 
subspace and cancels an 
exponential number of solutions.

\paragraph{Model of computation.}
The geometry of the constructions in
Sections~\ref{min_encl_cyl},~\ref{sep_hyper} will be described as if
exact square roots and expressions of the form $\sin\frac{\pi}{n}$
were available.  To make the reduction suitable for the Turing machine
model, the data must be perturbed using fixed-precision roundings.  This
can be done with polynomially many bits in a way similar to the
rounding procedure followed in~\cite{CGKR08,cgkmr-gcfpt-09}.  We omit the details
here.
The construction in Section~\ref{max_feasible} uses small integral
data.
\subsection{Notation}
Let $[n]=\{1,\ldots,n\}$ and $G([n]),E)$ be an undirected graph. 

In sections~\ref{min_encl_cyl},~\ref{sep_hyper}, 
it will be convenient to
view $\Reals^{2k}$ as the product of $k$ orthogonal planes $E_1,\dots, E_k$,
where each $E_i$ has coordinate axes $X_i,Y_i$. The origin is denoted by $o$.
The coordinates of a point $p\in\R^{2k}$ are denoted by $\left(x_1(p), y_1(p),\ldots ,x_k(p), y_k(p)\right)$.
The notions of a point and vector will be used interchangeably. We denote by 
$C_i$ the unit circle on $E_i$ centered at $o$. 

\section{Minimum enclosing cylinder (or stabbing balls with one line)}
\label{min_encl_cyl}
 
Given an undirected graph $G([n], E)$ we construct a set $\mathcal{B}$ of 
balls of equal radius $r$ in $\R^{2k}$ such that 
$\mathcal{B}$ can be stabbed by a line if and only if $G$ has an independent set of size $k$. 

For every ball $B\in\mathcal{B}$ we will also have $-B\in\mathcal{B}$. This allows us to restrict our attention to lines 
through the origin: a line that stabs $\mathcal{B}$ can be translated so that it goes through 
the origin and still stabs $\mathcal{B}$. In this section, by a line we always mean a line through the origin. 
For a line $l$, let $\vec{l}$ be its unit direction vector.

For each plane $E_i$, we define $2n$ $2k$-dimensional balls, 
whose centers $c_{i1},\ldots ,c_{i2n}$ are regularly spaced on the unit circle $C_i$.
Let $c_{iu}\in E_i$ be the center of the ball $B_{iu}$, $u\in[2n]$, with  
$$
x_i(c_{iu})=\cos(u-1)\tfrac{\pi}{n},\; y_i(c_{iu})=\sin(u-1)\tfrac{\pi}{n}.
$$
We define the scaffolding ball set $\mathcal{B}^0=\{B_{iu},\, i=1,\ldots,k \;\mathrm{and}\; u=1,\ldots,2n\}$. 
We have $|\mathcal{B}^0|=2nk$. All balls in $\mathcal{B}^0$ will have the same radius $r<1$, to be defined later.

Two antipodal balls $B$, $-B$ are stabbed by the same set of lines.
A line $l$ stabs a ball $B$ of radius $r$ and center $c$ if and only if $(c\cdot \vec{l})^2\geq \lVert c\rVert^2-r^2$. 
Thus, $l$ stabs $\mathcal{B}^0$ if and only if it satisfies the following system of $nk$ inequalities:
\begin{equation}
\nonumber 
(c_{iu}\cdot \vec{l})^2\geq \lVert c_{iu}\rVert^2-r^2=1-r^2, 
\;\;\text{for}\;\;
i=1,\ldots,k
\;\;\text{and}\;\;
u=1,\ldots,n. 
\end{equation}

Consider the inequality asserting that $l$ stabs $B_{iu}$.
Geometrically, it amounts to saying that
the projection $\vec{l}_i$ of $\vec{l}$ on the plane $E_i$ lies in one of the half-planes
\begin{equation*}
H_{iu}^+=\{p\in E_i | c_{iu}\cdot p \geq \sqrt{\lVert c_{iu}\rVert^2-r^2}\} 
\;\;\mathrm{or}\;\; 
H_{iu}^-=\{p\in E_i | c_{iu}\cdot p \leq -\sqrt{\lVert c_{iu}\rVert^2-r^2}\}.
\end{equation*}
Consider the situation on a plane $E_i$.
Looking at all half-planes $H_{i1}^+, H_{i1}^-,\ldots,H_{in}^+, H_{in}^-$, we see that 
$l$ stabs all balls $B_{iu}$ (centered on $E_i$) if and and only if $\vec{l}_i$ lies in one of the $2n$ wedges
$\pm (H_{i1}^- \cap H_{i2}^+),\ldots, \pm (H_{i(n-1)}^- \cap H_{in}^+), \pm (H_{i1}^- \cap H_{in}^-)$; 
see Fig.~\ref{wedges}. 
\begin{figure}
  \centering
	 \includegraphics[width=7.5cm]{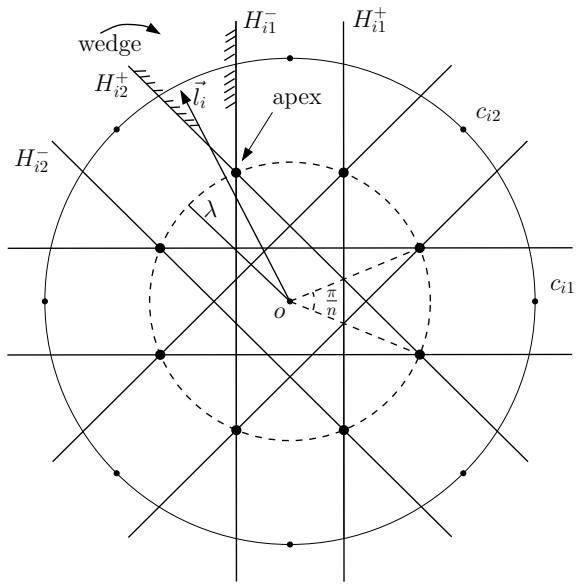}
	 \caption{Centers of the balls and their respective half-planes and wedges on a plane $E_i$, for $n=4$.}
	 \label{wedges}
\end{figure}
The apices of the wedges are regularly spaced on a circle of radius 
$\lambda=\sqrt{2(1-r^2)/(1-\cos\frac{\pi}{n})}$, 
and define the set 
\begin{equation*}
A_i=\{\pm \left(\lambda \cos(2u-1)\tfrac{\pi}{2n}, \lambda \sin(2u-1)\tfrac{\pi}{2n}\right)\in E_i,\, u=1,\ldots,n\}.
\end{equation*}
For $l$ to stab all balls $B_{iu}$, we must have that $\lVert \vec{l}_i\rVert\geq\lambda$.
We choose $r=\sqrt{1-(1-\cos\frac{\pi}{n})/(2k)}$ in order to obtain $\lambda=1/\sqrt{k}$.  

Since the above hold for every plane $E_i$, and since $\vec{l}\in\R^{2k}$ is a unit vector, 
we have
$$
1=\lVert l\rVert^2=\lVert l_1\rVert^2 +\cdots +\lVert l_k\rVert^2 \geq k\lambda^2 = 1.
$$
 Hence, equality holds throughout, which implies that $\lVert\vec{l}_i\rVert=1/\sqrt{k}$, for every $i\in\{1,\ldots,k\}$. Hence, for 
line $l$ to stab all balls in $\mathcal{B}^0$, every projection $\vec{l}_i$ must be one 
of the $2n$ apices in $A_i$.  Each projection $\vec{l}_i$ can be chosen independently. There are $2n$ choices, but since $\vec{l}$ and 
$-\vec{l}$ correspond to the same line, the total number of lines that stab $\mathcal{B}^0$ is $n^k 2^{k-1}$. 

For a tuple $(u_1,\ldots,u_k)\in[2n]^k$, we will denote by $l(u_1,\ldots, u_k)$ the stabbing line with direction vector
\begin{equation*}
\frac{1}{\sqrt{k}}\left(\cos(2u_1-1)\tfrac{\pi}{2n}, \sin(2u_1-1)\tfrac{\pi}{2n}, \ldots, 
\cos(2u_k-1)\tfrac{\pi}{2n}, \sin(2u_k-1)\tfrac{\pi}{2n}\right).
\end{equation*}
Two lines $l(u_1,u_2,...,u_k)$
and $l(v_1,v_2,...,v_k)$ are said to be equivalent
if $u_i \equiv v_i\pmod n$, for all $i$. 
This relation defines $n^k$ equivalence classes $L(u_1,\ldots,u_k)$, with $(u_1,\ldots,u_k)\in[n]^k$, 
where each class consists of $2^{k-1}$ lines. 

From the discussion above, it is clear that there is a bijection between the possible equivalence classes of lines that 
stab $\mathcal{B}^0$ and $[n]^k$. 
\subsection{Constraint balls}

We continue the construction of the ball set $\mathcal{B}$ by showing how
to encode the structure of~$G$.  For each pair of distinct indices $i\neq
j$ ($1\leq i,j \leq k$) and for each pair of (possibly equal) vertices
$u,v\in[n]$, we define a \emph{constraint set} $\mathcal{B}_{ij}^{uv}$ of balls
with the property that (all lines in) all classes $L(u_1,\ldots,u_k)$ stab
$\mathcal{B}_{ij}^{uv}$ except those with $u_i=u$ and $u_j=v$. The centers
of the balls in $\mathcal{B}_{ij}^{uv}$ lie in the $4$-space $E_i\times
E_j$. Observe that all lines in a particular class $L(u_1,\ldots,u_k)$ project onto
only two lines on $E_i\times E_j$.
We use a ball $B_{ij}^{uv}$ (to be defined shortly) of radius~$r$ that is stabbed by \emph{all} lines $l(u_1,\ldots,u_k)$ except those 
with $u_i=u$ and $u_j=v$.
Similarly, we use a ball $B_{ij}^{u\bar v}$ that is stabbed by \emph{all} lines $l(u_1,\ldots,u_k)$ except those 
with $u_i=u$ and $u_j=\bar v$, where $\bar v=v+n$. 
Our constraint set consists then of the four balls
$$\mathcal{B}_{ij}^{uv}=\{\pm B_{ij}^{uv}, \pm B_{ij}^{u\bar v}\}.$$

We describe now the placement of a ball $B_{ij}^{uv}$. Consider a line $l=l(u_1,\ldots,u_k)$ with 
$u_i=u$ and $u_j=v$. The center $c_{ij}^{uv}$ of $B_{ij}^{uv}$ will lie on a line ${z}\in E_i\times E_j$ that is
orthogonal to~$\vec{l}$, but not orthogonal to any line $l(u_1,\ldots,u_k)$ with 
$u_i\neq u$ or $u_j\neq v$. We choose the direction
$\vec{z}$ of $z$ as follows: 
$$x_i(\vec{z})=\mu(\cos\theta_i-3n\sin\theta_i),\, 
y_i(\vec{z})=\mu(\sin\theta_i+3n\cos\theta_i),$$ 
$$x_j(\vec{z})=\mu(-\cos\theta_j-6n^2\sin\theta_j),\, 
y_j(\vec{z})=\mu(-\sin\theta_j+6n^2\cos\theta_j),$$
where $\theta_i=(2u-1)\frac{\pi}{2n}$, $\theta_j=(2u-1)\frac{\pi}{2n}$, and $\mu=1/(9n^2+36n^4+2)$. 
It is straightforward to check that $\vec{l}\cdot\vec{z}=0$.

Let $\omega$ 
be the angle between $\vec{l'}$ and $\vec{z}$. 
We have the following lemma: 
\begin{lemma}
\label{omega_bound}
For any line $l'=l(u_1,\ldots,u_k)$, with $u_i\neq u$ or $u_j\neq v$ the angle $\omega$ 
between $\vec{l'}$ and $\vec{z}$ satisfies $|\cos\omega|>\frac{\mu}{\sqrt{k}}$.
\end{lemma}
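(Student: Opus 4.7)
The plan is to compute $\vec{l'}\cdot\vec{z}$ and $\|\vec{z}\|$ explicitly in closed form, reducing the lemma to an elementary trigonometric estimate. Since $\vec{z}$ is supported on $E_i\times E_j$, one has $\vec{l'}\cdot\vec{z}=\vec{l'}_i\cdot\vec{z}_i+\vec{l'}_j\cdot\vec{z}_j$. Writing $\vec{l'}_i=\tfrac{1}{\sqrt{k}}(\cos\theta'_i,\sin\theta'_i)$ with $\theta'_i=(2u_i-1)\pi/(2n)$, and similarly for $j$, the angle-sum identities $\cos\theta'\cos\theta+\sin\theta'\sin\theta=\cos(\theta'-\theta)$ and $\sin\theta'\cos\theta-\cos\theta'\sin\theta=\sin(\theta'-\theta)$ collapse both dot products. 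Setting $\alpha=(u_i-u)\pi/n$ and $\beta=(u_j-v)\pi/n$, the expansion yields
\[
\vec{l'}\cdot\vec{z}=\frac{\mu}{\sqrt{k}}\bigl(\cos\alpha+3n\sin\alpha-\cos\beta+6n^2\sin\beta\bigr).
\]
A parallel computation of the squared norm gives $\|\vec{z}_i\|^2=\mu^2(1+9n^2)$ and $\|\vec{z}_j\|^2=\mu^2(1+36n^4)$, hence $\|\vec{z}\|^2=\mu^2(2+9n^2+36n^4)=\mu$ by the choice $\mu=1/(9n^2+36n^4+2)$, so $\|\vec{z}\|=\sqrt{\mu}$. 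Since $\|\vec{l'}\|=1$, this gives
\[
\cos\omega=\frac{\sqrt{\mu}}{\sqrt{k}}\,Q,\qquad Q:=\cos\alpha+3n\sin\alpha-\cos\beta+6n^2\sin\beta,
\]
so the lemma is equivalent to $|Q|>\sqrt{\mu}$; since $\sqrt{\mu}<1$, any uniform constant lower bound on $|Q|$ will do.

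The lower bound on $|Q|$ is obtained by a short case analysis on whether $\beta=0$. Reading $u_i,u_j$ as the canonical representatives in $[n]$ that index the equivalence classes, one has $|\alpha|,|\beta|<\pi$, and a nonzero $\alpha$ (resp.\ $\beta$) is a nonzero integer multiple of $\pi/n$, so $|\sin\alpha|$ and $|\sin\beta|$ are either zero or at least $\sin(\pi/n)\ge 2/n$ for $n\ge 2$. If $\beta\neq 0$, the term $6n^2\sin\beta$ has magnitude at least $12n$, whereas $|\cos\alpha|+|3n\sin\alpha|+|\cos\beta|\le 3n+2$, giving $|Q|\ge 9n-2$. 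If $\beta=0$, then by hypothesis $\alpha\neq 0$ and $Q=3n\sin\alpha+(\cos\alpha-1)$; since $\cos\alpha-1\in[-2,0]$ while $|3n\sin\alpha|\ge 6$, a brief sign check (noting that $\cos\alpha-1\le 0$ always, so in particular the two summands cannot add with opposite signs to a value smaller than $|3n\sin\alpha|-2$) shows $|Q|\ge 4$. In either case $|Q|\ge 4$, and hence $|\cos\omega|\ge 4\sqrt{\mu}/\sqrt{k}>\mu/\sqrt{k}$ as required.

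The main obstacle is essentially the trigonometric bookkeeping: one must check that the inner products and norms telescope cleanly into the forms above, and in particular that the constant $\mu$ is chosen precisely so that $\|\vec{z}\|^2=\mu$. Once this is done, the bound on $|Q|$ is elementary, with the only mildly subtle step being the sub-case $\beta=0$, where one has to verify that the $\Theta(n)$-sized sine term cannot be cancelled by the $O(1)$ cosine correction.
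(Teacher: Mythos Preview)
Your approach is the same as the paper's: compute $\vec{l'}\cdot\vec{z}$ in closed form via the angle-addition identities, reduce to bounding the trigonometric quantity you call $Q$ (the paper calls it $\alpha$), and then do a short case analysis on which sine terms vanish. Your computation of $\|\vec{z}\|^2=\mu$ is in fact more careful than the paper, which tacitly writes $|\cos\omega|=|\vec{l'}\cdot\vec{z}|$ as if $\vec z$ were a unit vector; your formula $\cos\omega=\frac{\sqrt{\mu}}{\sqrt{k}}\,Q$ is the correct one, and since $\sqrt{\mu}<1$ the target $|Q|>1$ suffices either way.

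There is, however, a genuine gap in your case analysis. The lines $l(u_1,\ldots,u_k)$ are parametrized by tuples in $[2n]^k$, not $[n]^k$, and you cannot legitimately ``read $u_i,u_j$ as canonical representatives in $[n]$'': while the equivalence classes $L(\cdot)$ are indexed by $[n]^k$, the lemma concerns individual lines, and $|Q|$ is \emph{not} invariant under shifting $u_i$ alone by $n$ (only under shifting $u_i$ and $u_j$ simultaneously by $n$, which negates $Q$). Concretely, $u_j=v+n$ gives $\beta=\pi$ and $\sin\beta=0$, so your claim ``if $\beta\ne 0$ then $|6n^2\sin\beta|\ge 12n$'' fails there; likewise $\alpha=\pi$ breaks your $\beta=0$ branch. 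The paper's proof handles exactly these boundary situations by an explicit three-way split on whether $u_j\in\{1,n{+}1\}$, and then on $u_i$: for instance $(u_i,u_j)=(u{+}n,v)$ gives $Q=-2$ and $(u_i,u_j)=(u,v{+}n)$ gives $Q=2$, cases your argument as written does not cover. (Both the paper and you tacitly exclude $(u_i,u_j)=(u{+}n,v{+}n)$, where $Q=0$; that exclusion is justified by the surrounding construction rather than by the lemma statement itself.) Once you add the two missing subcases $\alpha=\pi$ and $\beta=\pi$ your argument goes through with $|Q|\ge 2$.
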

\begin{proof}
Without loss of generality we consider a fixed direction $\vec{z}$ where $\theta_i=\theta_j=\frac{\pi}{2n}$ (i.\,e., $u=v=1$).
Consider $\vec{l'}$ with $x_i(\vec{l'})=\cos\theta$, $y_i(\vec{l'})=\sin\theta$, $x_j(\vec{l'})=\cos\phi$, and 
$y_j(\vec{l'})=\sin\phi$, where $\theta=(2u_i-1)\frac{\pi}{2n}$ and $\phi=(2u_j-1)\frac{\pi}{2n}$, with $(u_i,u_j)\neq (1,1)$ and 
$(u_i,u_j)\neq (n+1, n+1)$. After straightforward calculations we have that
$|\cos\omega|=|\vec{l'}\cdot\vec{z}|=\frac{\mu}{\sqrt{k}}|\alpha|$, where 
$$\alpha=\cos(u_i-1)\tfrac{\pi}{n} + 3n\sin(u_i-1)\tfrac{\pi}{n} - \cos(u_j-1)\tfrac{\pi}{n} + 6n^2\sin(u_j-1)\tfrac{\pi}{n}.$$
We will show that $|\alpha|>1$. We will use the inequality:
$$|\sin(u_i-1)\tfrac{\pi}{n}|\geq|\sin\tfrac{\pi}{n}|>\tfrac{1}{n},$$
which holds for all $1\leq u_i\leq 2n$, with $u_i\neq 1$, $u_i\neq n+1$, and $n\geq 4$.
We examine the following cases:

(i) $u_j\neq 1$ and $u_j\neq n+1$. Then $u_i$ can take any value.
We have 
\begin{align*}
|\alpha| &\geq
\left||6n^2\sin(u_j-1)\frac{\pi}{n}|-|\cos(u_j-1)\frac{\pi}{n} - \cos(u_i-1)\frac{\pi}{n} - 3n\sin(u_i-1)\frac{\pi}{n}|\right|
\\&
>|6n^2\cdot \tfrac{1}{n} - |2+3n||
\\&
=3n-2 > 1.
\end{align*}

(ii) $u_j=1$. Then $u_i\neq 1$. If also $u_i\neq n+1$, we have
\begin{align*}
|\alpha| &\geq
|0-1+3n\sin(u_i-1)\frac{\pi}{n} + \cos(u_i-1)\frac{\pi}{n}|
\\&
>|-1+3n\cdot \frac{1}{n}-1| = 1.
\end{align*} 
If $u_i=n+1$, then $|\alpha|=2$.

(iii) $u_j=n+1$. Then $u_i\neq n+1$. The two cases where $u_i\neq 1$ or $u_i=1$ are dealt with similarly to the previous case.
\end{proof}
This lower bound on $|\cos\omega|$ helps us 
place $B_{ij}^{uv}$ sufficiently close to the origin so that it is still intersected by $l'$, i.\,e., 
$\vec{l'}$ lies in one of the half-spaces 
$c_{ij}^{uv}\cdot p\geq \sqrt{\lVert c_{ij}^{uv}\rVert^2-r^2}$ or $c_{ij}^{uv}\cdot p\leq -\sqrt{\lVert c_{ij}^{uv}\rVert^2-r^2}$, 
$p\in\R^{2k}$.

We claim that any point $c_{ij}^{uv}$ on $z$ with $r<\lVert c_{ij}^{uv}\rVert<\sqrt{\frac{k}{k-\mu^2}}r$ will do.
For any position of $c_{ij}^{uv}$ on $z$ with $\lVert c_{ij}^{uv}\rVert>r$, we have 
$(c_{ij}^{uv}\cdot \vec{l})^2=0<\lVert c_{ij}^{uv}\rVert^2-r^2$, 
i.\,e., $l$ does not stab $B_{ij}^{uv}$. 
On the other hand, as argued above we need that $|c_{ij}^{uv}\cdot \vec{l'}|\geq \sqrt{\lVert c_{ij}^{uv}\rVert^2-r^2}$.
Since $c_{ij}^{uv}\cdot \vec{l'}=\cos\omega\cdot \lVert c_{ij}^{uv}\rVert$, 
we have the condition $|\cos\omega|\geq\sqrt{1-\frac{r^2}{\lVert c_{ij}^{uv}\rVert^2}}$. By Lemma~\ref{omega_bound} we know that 
$|\cos\omega|>\frac{\mu}{\sqrt{k}}$, hence by choosing $\lVert c_{ij}^{uv}\rVert$ 
so that $\frac{\mu}{\sqrt{k}}>\sqrt{1-\frac{r^2}{\lVert c_{ij}^{uv}\rVert^2}}$ 
we are done.
\paragraph{Reduction.}
Similarly to~\cite{CGKR08}, the structure of the input graph $G([n], E)$ can now be represented as follows.
We add to $\mathcal{B}^0$ the $4n\binom k 2$ balls in 
$\mathcal{B}_V=\bigcup \mathcal{B}_{ij}^{uu},\, 1\leq u\leq n, \, 1\leq i<j\leq k$, to ensure that 
all components $u_i$ in a solution (class of lines $L(u_1,\ldots,u_k)$) are distinct. 
For each edge $uv\in E$ we also add the balls in $k(k-1)$ sets $\mathcal{B}_{ij}^{uv}$, with $i\neq j$.
This ensures that the remaining classes of lines $L(u_1,\ldots,u_k)$ represent independent sets of size $k$.
In total, the edges are represented by the $4k(k-1)|E|$ balls in 
$\mathcal{B}_E=\bigcup \mathcal{B}_{ij}^{uv},\, uv\in E,\, 1\leq i,j\leq k,\, i\neq j$.
The final set $\mathcal{B}=\mathcal{B}^0\cup\mathcal{B}_V\cup\mathcal{B}_E$ has $2nk+4\binom k 2 (n+2|E|)$ balls.

As noted in above, there is a bijection between the possible equivalence classes of lines $L(u_1,\ldots,u_k)$ that 
stab $\mathcal{B}$ and the tuples $(u_1,\ldots,u_k)\in [n]^k$. The constraint sets of balls exclude tuples with two equal indices 
$u_i=u_j$ or with indices $u_i$, $u_j$ when $u_iu_j\in E$, thus, the classes of lines that stab $B$ represent exactly the independent 
sets of $G$. Thus, we have the following:
\begin{lemma}
Set $\mathcal{B}$ can be stabbed by a line if an only if $G$ has an independent set of size $k$.
\end{lemma}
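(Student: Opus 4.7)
The proof will be a direct biconditional argument that leverages the scaffolding-plus-constraint structure already established, without requiring any new geometry. The plan is to reduce the statement to bookkeeping on the tuple $(u_1,\ldots,u_k)\in[n]^k$ that indexes an equivalence class, and then read off the independent-set property from which constraint balls $\mathcal{B}_{ij}^{uv}$ have been added.

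For the ($\Rightarrow$) direction, I start from an arbitrary line $l$ that stabs $\mathcal{B}$. Since $l$ in particular stabs $\mathcal{B}^0$, the analysis of the scaffolding forces $l$ (after translation to the origin, as allowed by antipodal symmetry) to belong to a unique equivalence class $L(u_1,\ldots,u_k)$ with $(u_1,\ldots,u_k)\in[n]^k$. Now I use the constraint balls: because $l$ also stabs every ball in $\mathcal{B}_V=\bigcup_{i<j,\,u}\mathcal{B}_{ij}^{uu}$, the designed property of $\mathcal{B}_{ij}^{uu}$ (to exclude exactly those classes with $u_i\equiv u$ and $u_j\equiv u\pmod n$) rules out $u_i=u_j$ for every pair $i<j$, so the $u_i$ are distinct and $\{u_1,\ldots,u_k\}$ has size $k$. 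Similarly, for every edge $uv\in E$ and every $i\neq j$, stabbing $\mathcal{B}_{ij}^{uv}$ forbids the combination $u_i=u$, $u_j=v$, so no two members of $\{u_1,\ldots,u_k\}$ span an edge. This gives an independent set of size $k$.

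For the ($\Leftarrow$) direction, I take an independent set $\{s_1,\ldots,s_k\}\subseteq[n]$ of $G$ and pick any representative line $l$ from the equivalence class $L(s_1,\ldots,s_k)$. By construction $l$ stabs $\mathcal{B}^0$. For the constraint sets: distinctness of the $s_i$ means $s_i\neq s_j$ for $i<j$, hence $l$ avoids the excluded condition of every $\mathcal{B}_{ij}^{uu}$ and therefore stabs $\mathcal{B}_V$; independence of $\{s_1,\ldots,s_k\}$ means that for every edge $uv\in E$ and every ordered pair $i\neq j$, we cannot have $s_i=u$ and $s_j=v$, so $l$ stabs every $\mathcal{B}_{ij}^{uv}\subseteq\mathcal{B}_E$. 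Altogether $l$ stabs $\mathcal{B}=\mathcal{B}^0\cup\mathcal{B}_V\cup\mathcal{B}_E$.

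The only subtle point worth explicitly addressing is why the two balls $\pm B_{ij}^{uv},\pm B_{ij}^{u\bar v}$ in $\mathcal{B}_{ij}^{uv}$ suffice to cancel the \emph{entire} equivalence class with $u_i\equiv u,\,u_j\equiv v\pmod n$, even though each single ball only excludes one specific tuple in $[2n]^k$. I would include a short remark here: antipodal identification turns the simultaneous shift $(u_1,\ldots,u_k)\mapsto(u_1+n,\ldots,u_k+n)$ into the same line, so the four tuples $(\{u,u+n\}\times\{v,v+n\})$ within the class collapse, via $\pm$, onto the two tuples $(u,v)$ and $(u,\bar v)$ already targeted by the pair of constraint balls. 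Once this identification is noted, both directions are purely combinatorial and no further geometric estimates beyond Lemma~\ref{omega_bound} are required.
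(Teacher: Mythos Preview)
Your proposal is correct and follows essentially the same approach as the paper, which argues (just before the lemma) that the equivalence classes of stabbing lines for $\mathcal{B}^0$ are indexed by $[n]^k$ and that the constraint sets $\mathcal{B}_{ij}^{uv}$ in $\mathcal{B}_V\cup\mathcal{B}_E$ eliminate precisely those tuples with $u_i=u_j$ or $u_iu_j\in E$. Your explicit biconditional write-up and the remark on how the antipodal identification $(u_1,\ldots,u_k)\mapsto(u_1+n,\ldots,u_k+n)$ collapses the four $(i,j)$-projections onto the two targeted by $B_{ij}^{uv}$ and $B_{ij}^{u\bar v}$ make explicit what the paper leaves to the reader.
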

From this lemma and since this is an fpt-reduction, we conclude:
\begin{theorem}
Deciding whether $n$ unit balls in $\R^{d}$ can be stabbed with one line is 
\textup{W[1]}-hard with respect to $d$.
\end{theorem}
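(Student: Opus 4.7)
The plan is to observe that the theorem follows almost immediately from the preceding lemma, once we check that the mapping $(G,k)\mapsto(\mathcal{B},2k)$ is an fpt-reduction from $k$-independent set to the line-stabbing problem parameterized by the ambient dimension $d$. Since $k$-independent set in general graphs is W[1]-complete~\cite{DF99}, exhibiting such an fpt-reduction is enough.

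First I would spell out and verify the three requirements of an fpt-reduction in turn. The new parameter is $d=2k$, a linear (hence computable) function of the old parameter $k$. The output $\mathcal{B}=\mathcal{B}^0\cup\mathcal{B}_V\cup\mathcal{B}_E$ contains $2nk+4\binom{k}{2}(n+2|E|)=O(k^2(n+|E|))$ unit balls, each specified by its center in $\R^{2k}$, so the total bit-size is polynomial in $n$ and $k$. Correctness, i.\,e.\ that a stabbing line exists iff $G$ has an independent set of size $k$, is exactly the content of the preceding lemma.

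Next I would address the only non-trivial algorithmic point, namely that the scaffolding centers involve the trigonometric values $\cos(u-1)\tfrac{\pi}{n}$, $\sin(u-1)\tfrac{\pi}{n}$ and that the constraint-ball centers involve the direction $\vec{z}$ together with a norm chosen from the open interval $(r,\sqrt{k/(k-\mu^2)}\,r)$. These quantities are not exactly representable over the rationals, but the fixed-precision rounding procedure pointed to in the Model of Computation paragraph yields polynomial-bit approximations that preserve all of the strict inequalities used in the scaffolding and constraint analyses, so the whole reduction runs in time polynomial in the binary size of $G$ (and even independent of $k$ in the exponent).

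The main obstacle has in fact already been confronted earlier in the construction: the angular separation bound of Lemma~\ref{omega_bound}, which is what allows each constraint-ball center $c_{ij}^{uv}$ to be placed close enough to the origin that every ``wrong'' direction $\vec{l'}$ still stabs the ball while the ``forbidden'' direction $\vec l$ does not. Once that lemma, together with the preceding correctness lemma, is in hand, the theorem itself reduces to a short bookkeeping step: count the balls, note $d=2k$, invoke the previous lemma, and conclude W[1]-hardness from the W[1]-completeness of $k$-independent set.
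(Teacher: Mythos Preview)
Your proposal is correct and matches the paper's approach: the paper itself proves this theorem in one line, simply invoking the preceding lemma and noting that the construction is an fpt-reduction. Your write-up is in fact more explicit than the paper's, spelling out the parameter bound $d=2k$, the polynomial instance size, and the rounding issue, all of which the paper either leaves implicit or defers to the Model of Computation paragraph.
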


\section{Separating two point sets by two hyperplanes}
\label{sep_hyper}

Let $P$ and $Q$ be two point sets in $\Rd$.
Two hyperplanes split space generically into four ``quarters''. 
There are three different versions of what it means to separate $P$ and $Q$ by two hyperplanes: 

\begin{itemize}\addtolength{\itemsep}{-0.5\baselineskip}
\item[(a)] Each quarter contains only points of one set.
\item[(b)] The set $Q$ is contained in one quarter only, 
           and set $P$ can populate the remaining three quarters.
\item[(c)] Same as (b), but the roles of $P$ and $Q$ are not fixed in advance.
\end{itemize}

In the following we work only with version (a), which is the most general. 
For the point sets that we construct,
it will turn out that if a separation according to (a)
exists, it will also be valid by (b) and (c).
Thus, our reduction works for all three versions of the problem.

Separation according to (a) 
is equivalent to requiring that
every segment $pq$ between a point $p\in P$ and a point $q\in Q$ is intersected by one of the two hyperplanes.
Note that we restrict our attention to \emph{strict} separation, i.\,e., no 
hyperplane can go through a point of $P$ or $Q$. (The result extends to weak separation; see 
the end of this section.)

Given an undirected graph $G_0([n_0], E_0)$ with $n_0\geq2$ and an
integer $k$, we construct two point sets $P$ and $Q$ in $\R^{2k}$ with
the property that they can be separated by two hyperplanes if and only
if $G_0$ has an independent set of size $k$.  For technical reasons,
we duplicate the vertices of the graph: we build a new graph with $n=
2n_0$ vertices.  Every vertex $u\in[n_0]$ of the original graph gets a
second copy $u' := u+n_0$, and for every original edge $uv$, there are
now four edges $uv$, $uv'$, $u'v$, $u'v'$. The new graph $G([n],E)$
has an independent set of size $k$ if and only if the original graph
has such a set.

On each plane $E_i$, $i=1,\ldots ,k$, we define a set $P_i$ of $n$ points regularly spaced on the circle $C_i$:
$$
	P_i=\{\,p_{iu}\in E_i \mid x_i(p_{iu})=\cos (u-1)\tfrac{2\pi}{n},\,
		y_i(p_{iu})=\sin (u-1)\tfrac{2\pi}{n},\ u=1,\ldots ,n\,\}.
$$
For an index $u\in [n]$, it will be convenient to define its \emph{antipodal} and \emph{almost antipodal} partner 
$u'=u+\frac{n}{2}$ and $\bar u=u+\frac{n}{2} + 1$ respectively. (All
indices are modulo $n$). Thus we are extending the notation $u'$ to
all (original and new) vertices $u$, with $(u')'=u$.

The scaffolding is defined by two sets $P=\bigcup P_i$ and $Q^0=\{o\}$. We have $|P|=nk$. 

Since the points in each $P_i$ are regularly spaced on $C_i$, 
a hyperplane that does not contain the origin can intersect at most $n/2$ segments $op_{iu}$ on each plane $E_i$. Hence, 
at least two hyperplanes are needed to separate $P$ and $Q^0$. Actually, two suffice.
One hyperplane can intersect the $n/2$ consecutive (in a counter-clockwise order) segments 
$op_{i\bar u_i},\ldots ,op_{i u_i}$ on each $E_i$, for a choice of $u_i\in [n]$ (see Fig.~\ref{separation}). 
\begin{figure}
  \centering
	 \includegraphics[width=9cm]{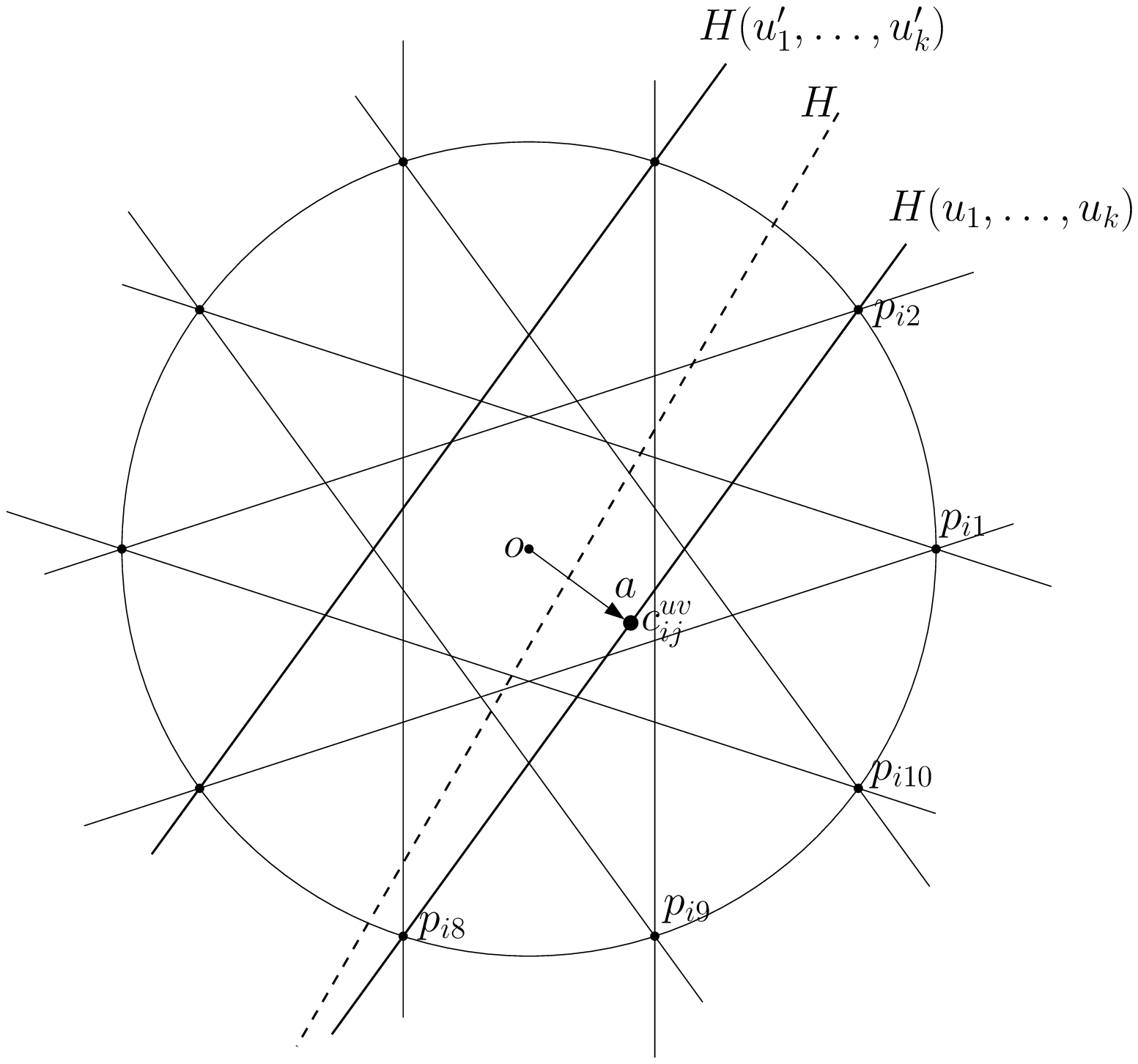}
	 \caption{Point set $P_i$, for $n=10$, 
                  a hyperplane $H$ in the class $\mathcal{H}(u_1,\ldots ,u_k)$} and the corresponding 
                  boundary hyperplane $H(u_1,\ldots ,u_k)$ for $u_i=2$. The placement of $q_{ij}^{uv}$ 
                  is shown in a two-dimensional analog.
	 \label{separation}
\end{figure}
There is an infinite number of such hyperplanes, 
forming an equivalence class $\mathcal{H}(u_1,\ldots ,u_k)$. 
Since the planes $E_1,\ldots ,E_k$ are orthogonal, 
each $u_i$ independently defines which of the $n/2$ consecutive segments on $E_i$ are intersected by a hyperplane 
in $\mathcal{H}(u_1,\ldots ,u_k)$.
The remaining $n/2$ segments 
$-op_{i\bar u_i},\ldots ,-op_{i u_i}$ on each $E_i$ 
can then be intersected by any hyperplane in the `complementary' class 
$\mathcal{H}(u'_1,\ldots ,u'_k)=\{-H \mid H\in \mathcal{H}(u_1,\ldots ,u_k)\}$.
Effectively, every hyperplane in $\mathcal{H}(u_1,\ldots ,u_k)$ separates $Q^0$ from the $\frac{kn}{2}$-point set
$P(u_1,\ldots ,u_k)=\{p_{1\bar u_1},\ldots ,p_{1 u_1}\}\cup\cdots \cup \{p_{k\bar u_k},\ldots ,p_{k u_k}\}$.
Concluding, there are $n^k$ possible partitions of 
$P$
into two groups, each separated from $Q^0$ by one hyperplane,
in correspondence to the $n^k$ possible tuples $(u_1,\ldots ,u_k)\in [n]^k$:
\begin{lemma}\label{lemma-candidate-hyperplanes}
  The possible pairs of hyperplanes
that separate
$P$ from $Q^0$ are of the form  $h,h'$ with
 $h\in\mathcal{H}(u_1,\ldots ,u_k)$ and
$h'\in\mathcal{H}(u'_1,\ldots ,u'_k)$, for some
 $(u_1,\ldots ,u_k)\in [n]^k$.  
\end{lemma}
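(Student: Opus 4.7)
The plan is to analyse each $2$-plane $E_i$ separately and then assemble the results using orthogonality. Let $h_1,h_2$ be a pair of hyperplanes strictly separating $P$ from $Q^0=\{o\}$. Since the separation is strict, neither $h_j$ contains $o$, so on each $E_i$ the intersection $h_j\cap E_i$ is either empty or a line not passing through $o$. A spoke $op_{iu}$ is cut by $h_j$ precisely when $p_{iu}$ lies in the open half-plane of $E_i$ bounded by $h_j\cap E_i$ that does not contain~$o$.

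The first step is a geometric bound. The half-plane just described meets $C_i$ in an open arc of angular measure strictly less than~$\pi$, because $o$ is the centre of $C_i$ and is excluded from that half-plane. As the $n$ points $p_{iu}$ are regularly spaced around $C_i$ with angular gap $2\pi/n$ and $n$ is even, such an arc contains at most $n/2$ of them, and their indices form a cyclically consecutive block in $[n]$. Strictness of the separation is essential here; otherwise a closed semicircle could accommodate $n/2+1$ of the $p_{iu}$ and the whole count would collapse.

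The second step is combinatorial. On each $E_i$, the two blocks of indices cut out by $h_1$ and $h_2$ must together cover all $n$ indices, while each has size at most~$n/2$; hence each has size exactly $n/2$, and the two consecutive blocks partition $[n]$ viewed cyclically. Two complementary length-$n/2$ arcs in such a cycle are necessarily of the form $\{\bar u_i,\ldots,u_i\}$ and $\{\bar u'_i,\ldots,u'_i\}$ for a unique $u_i\in[n]$, using the notation $u'=u+n/2$ and $\bar u=u+n/2+1$ fixed above. Defining $u_i$ to be the endpoint of $h_1$'s arc on $E_i$ gives a consistent choice across planes.

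Finally, because $E_1,\ldots,E_k$ are pairwise orthogonal, the restrictions $h_j\cap E_i$ are specified independently on each plane. Collecting the labels $u_i$ as $i$ ranges over $1,\ldots,k$ yields $h_1\in\mathcal{H}(u_1,\ldots,u_k)$ for some $(u_1,\ldots,u_k)\in[n]^k$, and by the complementarity of the two blocks on each plane, $h_2\in\mathcal{H}(u'_1,\ldots,u'_k)$. I expect the main obstacle to be the tight ``at most $n/2$'' count: it rests on the strictness of the separation together with $o$ being the centre of symmetry of $P_i$, and these together exclude the boundary configuration of $n/2+1$ equispaced points inside a closed semicircle. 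The remainder of the argument is routine book-keeping.
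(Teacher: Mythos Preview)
Your argument is correct and follows essentially the same line as the paper's own justification, which is the informal discussion immediately preceding the lemma: a hyperplane avoiding $o$ cuts at most $n/2$ of the spokes on each $E_i$, the cut spokes are consecutive, and a counting argument forces the two hyperplanes to partition the $n$ spokes on each plane into two complementary blocks of size exactly $n/2$. Your treatment is in fact more careful than the paper's, as you make explicit the role of strict separation in the $n/2$ bound and spell out why the two blocks must be disjoint (since $|A|,|B|\le n/2$ and $|A\cup B|=n$ force $|A|=|B|=n/2$ and $A\cap B=\emptyset$). One small remark: your final sentence invoking orthogonality (``the restrictions $h_j\cap E_i$ are specified independently'') is not really what is being used here---orthogonality matters for the \emph{existence} of hyperplanes in each class $\mathcal{H}(u_1,\ldots,u_k)$, but for the direction you are proving you only need that the restriction of $h_1$ to each $E_i$ determines a label $u_i$, which is automatic.
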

Since by construction, the graph $G$ has the property that $uv\in E$
iff $u'v'\in E$, the separating pairs of hyperplanes $h,h'$ can be
used to encode the potential independent sets $\{u_1,\ldots ,u_k\}$:
it does not matter which of $h$ and $h'$ we choose, the corresponding
vertex set will be an independent set in both cases, or a dependent
set in both cases.

\subsection{Constraint points}\label{sep_constraint_points}

For each pair of indices $i\neq
j$ ($1\leq i,j \leq k$) and for each pair of (possibly equal) vertices
$u,v\in[n]$, we will define a constraint point $q_{ij}^{uv}\in E_i\times E_j$
with the following property: in every class $\mathcal{H}(u_1,\ldots ,u_k)$, there is a hyperplane that 
separates $\{q_{ij}^{uv}\}$ from $P(u_1,\ldots ,u_k)$ 
except those classes with $u_i=u$ and $u_j=v$ (in which case no such hyperplane exists). 
In this way, no partition of $P$ into sets $P(u_1,\ldots ,u_k)$ and $P(u'_1,\ldots ,u'_k)$ 
with $u_i=u$ and $u_j=v$ will be possible such that each set is separated from $Q^0\cup\{q_{ij}^{uv}\}$ by 
a hyperplane. 

Let $H(u_1,\ldots ,u_k)$ be the unique hyperplane 
through the $2k$ affinely independent points 
$p_{1 u_1}, p_{1\bar u_1},\ldots ,p_{k u_k}, p_{k\bar u_k}$. Note that 
$H(u_1,\ldots ,u_k)$ is \emph{not} in the class $\mathcal{H}(u_1,\ldots ,u_k)$, since we want strict separation; 
informally,
$H(u_1,\ldots ,u_k)$ lies at the boundary of $\mathcal{H}(u_1,\ldots ,u_k)$, with an appropriate parameterization of hyperplanes:
moving
$H(u_1,\ldots ,u_k)$ towards the origin by a sufficiently small amount leads to a hyperplane in $\mathcal{H}(u_1,\ldots ,u_k)$.

We define the constraint point $q_{ij}^{uv}$ as
the centroid of the four points
 $p_{i u}, p_{i\bar u} ,p_{j v}, p_{j\bar v}$.
Its nonzero coordinates are
$$
 x_i=
\frac{\cos\theta_i + \cos\bar\theta_i}4
,\
 y_i=
\frac{\sin\theta_i + \sin\bar\theta_i}4
,\
 x_j=
\frac{\cos\theta_j + \cos\bar\theta_j}4
,\
 y_j=
\frac{\sin\theta_j + \sin\bar\theta_j}4
,
 $$
for $\theta_i=(u-1)\frac{2\pi}{n}$, $\bar\theta_i=(\bar u-1)\frac{2\pi}{n}$,
 $\theta_j=(v-1)\frac{2\pi}{n}$, and $\bar\theta_j=(\bar v-1)\frac{2\pi}{n}$.

\begin{lemma}\label{lemma1_sep_hyper}
  If $u_i=u$ and $u_j=v$, no hyperplane in 
$\mathcal{H}(u_1,\ldots
  ,u_k)$ separates $q_{ij}^{uv}$ from
$P(u_1,\ldots ,u_k)$.
\end{lemma}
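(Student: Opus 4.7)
The plan is to exploit the explicit choice of $q_{ij}^{uv}$ as a centroid of four scaffolding points, and to observe that, precisely when $u_i=u$ and $u_j=v$, all four of these points happen to lie in $P(u_1,\ldots,u_k)$. Once that is set up, convexity of halfspaces finishes the argument immediately.

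First I would unwind the definition of $P(u_1,\ldots,u_k)$ on the two ``active'' planes $E_i,E_j$. Recall that $P(u_1,\ldots,u_k)$ contains, on $E_i$, the $n/2$ consecutive points $p_{i\bar u_i},p_{i(\bar u_i+1)},\ldots,p_{iu_i}$ (indices mod $n$); in particular both endpoints $p_{i u_i}$ and $p_{i\bar u_i}$ belong to this list. Substituting $u_i=u$ and $u_j=v$, I get that $p_{iu},p_{i\bar u},p_{jv},p_{j\bar v}$ all belong to $P(u_1,\ldots,u_k)$. By the definition of $q_{ij}^{uv}$ as the centroid of exactly these four points, $q_{ij}^{uv}$ lies in the convex hull of a subset of $P(u_1,\ldots,u_k)$.

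Then I would argue by contradiction. Suppose some hyperplane $H\in\mathcal{H}(u_1,\ldots,u_k)$ strictly separates $q_{ij}^{uv}$ from $P(u_1,\ldots,u_k)$. By the very definition of the class $\mathcal{H}(u_1,\ldots,u_k)$, $H$ already strictly separates the origin from $P(u_1,\ldots,u_k)$, so all of $P(u_1,\ldots,u_k)$ lies in one open halfspace determined by $H$, call it $H^+$. Under the supposed separation, $q_{ij}^{uv}$ would lie in the opposite open halfspace $H^-$. But $q_{ij}^{uv}$ is a convex combination of points of $P(u_1,\ldots,u_k)\subset H^+$, and $H^+$ is convex, so $q_{ij}^{uv}\in H^+$, contradicting $q_{ij}^{uv}\in H^-$.

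The proof is essentially a one-line convexity argument; the only step that deserves care is verifying that $p_{iu},p_{i\bar u},p_{jv},p_{j\bar v}$ really do lie in $P(u_1,\ldots,u_k)$ under the hypothesis $u_i=u,u_j=v$. This in turn is the reason the vertex-duplication trick (replacing $n_0$ by $n=2n_0$ and defining $\bar u = u + n/2 + 1$) was built into the construction: it guarantees that both $p_{iu}$ and its almost antipode $p_{i\bar u}$ sit inside the same half-circle selected by $u_i=u$. No further computation is needed, since the centroid being a convex combination is immediate from the coordinate formulas for $q_{ij}^{uv}$.
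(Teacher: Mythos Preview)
Your proof is correct and is exactly the paper's argument, spelled out in more detail: since $q_{ij}^{uv}$ is the centroid of $p_{iu},p_{i\bar u},p_{jv},p_{j\bar v}$, all of which lie in $P(u_1,\ldots,u_k)$ when $u_i=u$ and $u_j=v$, no hyperplane can separate it from that set. One small aside: the membership of $p_{i\bar u}$ in $P(u_1,\ldots,u_k)$ is immediate from the definition (it is literally the first point in the list $p_{i\bar u_i},\ldots,p_{iu_i}$) rather than a consequence of the vertex-duplication trick, whose real purpose is the symmetry $uv\in E\Leftrightarrow u'v'\in E$ used after Lemma~\ref{lemma-candidate-hyperplanes}.
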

\begin{proof}
  Such a hyperplane would in particular have to separate
$q_{ij}^{uv}$ from $p_{i u}, p_{i\bar u} ,p_{j v}, p_{j\bar v}$,
which is impossible.
\end{proof}

To see that 
$q_{ij}^{uv}$ does not ``destroy'' the classes
 $\HH(u_1,\ldots ,u_k)$ with $u_i\neq u$ or $u_j\neq v$,
let us consider a fixed pair of indices $i\neq
j$.
 All points
$q_{ij}^{uv}$, ($u,v\in[n]$) lie on a sphere $S_{ij}$
 around the origin in $E_i\times E_j$
(of radius $\sqrt {1/2}\cdot\sin \frac\pi n$).
 The intersection 
$H(u_1,\ldots ,u_k)\cap (E_i\times E_j)$ is a 
3-dimensional hyperplane
 $F_{ij}^{u_i u_j}$ uniquely defined by $u_i$ and $u_j$:
 $F_{ij}^{u_i u_j}$ goes through
the four points $p_{i u_i}, p_{i\bar u_i} ,p_{j u_j}, p_{j\bar u_j}$.
Moreover, $q_{ij}^{u_i u_j}$ is the point where 
 $F_{ij}^{u_i u_j}$ touches the sphere $S_{ij}$.
(This follows from symmetry considerations, and it can
also be checked by a straightforward calculation that the vector
 $q_{ij}^{u_i u_j}$ is perpendicular to the hyperplane
 $F_{ij}^{u_i u_j}$.)
This allows us to conclude:
\begin{lemma}\label{lemma2_sep_hyper}
If $u_i\neq u$ or $u_j\neq v$, 
then
$q_{ij}^{uv}$ lies on the same side of
 the hyperplane $H(u_1,\ldots ,u_k)$ as the origin~$o$.
\end{lemma}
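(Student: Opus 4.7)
The plan is to reduce the claim to the 4-dimensional subspace $E_i\times E_j$, where all the relevant objects live. Because the constraint point $q_{ij}^{uv}$ lies in $E_i\times E_j$, whether $q_{ij}^{uv}$ lies on the origin's side of $H(u_1,\ldots,u_k)$ is determined entirely by which side of $F_{ij}^{u_iu_j}=H(u_1,\ldots,u_k)\cap(E_i\times E_j)$ it lies on within that 4-space: the linear functional defining $H(u_1,\ldots,u_k)$ restricted to $E_i\times E_j$ vanishes exactly on $F_{ij}^{u_iu_j}$ and takes the same value at $o$ on either side. So it suffices to show that $q_{ij}^{uv}$ lies on the same side of $F_{ij}^{u_iu_j}$ as the origin.

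Next I would invoke the tangency observation already recorded just before the lemma: $F_{ij}^{u_iu_j}$ is tangent to the sphere $S_{ij}\subset E_i\times E_j$ at the point $q_{ij}^{u_iu_j}$. Since $S_{ij}$ is centered at $o$, the entire closed ball bounded by $S_{ij}$ lies in the closed half-space of $F_{ij}^{u_iu_j}$ containing $o$, and every point of $S_{ij}$ \emph{other than} the tangent point $q_{ij}^{u_iu_j}$ lies strictly on the origin's open side.

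What remains is to verify that the hypothesis ``$u_i\neq u$ or $u_j\neq v$'' already forces $q_{ij}^{uv}\neq q_{ij}^{u_iu_j}$. This is a quick check from the coordinate formula: the projection of $q_{ij}^{uv}$ onto $E_i$ is the midpoint of $p_{iu}$ and $p_{i\bar u}$, which by a sum-to-product simplification is a nonzero vector of length $\tfrac{1}{2}|\sin(\pi/n+\pi/2)|$ whose direction depends injectively on $u$, and analogously for $v$ in $E_j$. Combining with the previous paragraph, $q_{ij}^{uv}\in S_{ij}\setminus\{q_{ij}^{u_iu_j}\}$ lies strictly on the origin's side of $F_{ij}^{u_iu_j}$, and hence of $H(u_1,\ldots,u_k)$. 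The only geometric content is the tangency property, which is already asserted in the excerpt, so the remaining work is bookkeeping; the main potential pitfall is making sure that the sign conventions line up when passing from $H(u_1,\ldots,u_k)$ to its restriction $F_{ij}^{u_iu_j}$, which is the point of the first paragraph.
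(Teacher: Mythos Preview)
Your argument is correct and follows essentially the same route as the paper's proof: both use that $S_{ij}$ is tangent to $H(u_1,\ldots,u_k)$ (via $F_{ij}^{u_iu_j}$) at the single point $q_{ij}^{u_iu_j}$, so every other point of $S_{ij}$---in particular $q_{ij}^{uv}$ when $(u,v)\neq(u_i,u_j)$---lies strictly on the origin's side. One inessential slip: the length of the $E_i$-projection is $\tfrac{1}{2}\sin(\pi/n)$, not $\tfrac{1}{2}\lvert\sin(\pi/n+\pi/2)\rvert$, but this does not affect the injectivity claim or the conclusion.
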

\begin{proof}
  The point $q_{ij}^{uv}$ lies on the sphere $S_{ij}\in E_i\times E_j$
  centered at the origin.  This sphere lies on the same side of
  $H(u_1,\ldots ,u_k)$ as the origin, except for the point where it
  touches $H(u_1,\ldots ,u_k)$. But this touching point $q_{ij}^{u_i
    u_j}$ is different from $q_{ij}^{uv}$.
\end{proof}

 This means that $q_{ij}^{uv}$ and the points in $P(u_1,\ldots ,u_k)$ are 
on different sides of the hyperplane $H(u_1,\ldots ,u_k)$ (except for the points 
$p_{1 u_1}, p_{1\bar u_1},\ldots ,p_{k u_k}, p_{k\bar u_k}$, which lie on it). Since $q_{ij}^{uv} \notin H(u_1,\ldots ,u_k)$, 
every sufficiently close translate of $H(u_1,\ldots ,u_k)$ in $\mathcal{H}(u_1,\ldots ,u_k)$ with $u_i\neq u$ or $u_j\neq v$
separates $P(u_1,\ldots ,u_k)$ and $\{q_{ij}^{uv}\}$.



\paragraph{Reduction.}
Similarly to the reduction in Section~\ref{min_encl_cyl}, 
we encode the structure of $G$ by adding to $Q^0$ the $n\binom{k}{2}$ constraint points $q_{ij}^{uu}$
($1\leq u\leq n, \, 1\leq i<j\leq k$)
and $2|E|\binom{k}{2}$ constraint points $q_{ij}^{uv}$ ($uv\in E$ and $i\neq j$).
Let $Q$ be the resulting point set. 
Then the possible partitions of $P$ into two sets, each separated from $Q$ by one hyperplane, represent 
the independent sets of $G$.
\begin{lemma}
Sets $P$ and $Q$ can be separated by two hyperplanes if and only if $G$ has an independent set of size $k$.
\end{lemma}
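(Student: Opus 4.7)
The plan is to prove the two directions by exploiting the correspondence between separating hyperplanes and tuples $(u_1,\ldots,u_k)\in[n]^k$ given by Lemma~\ref{lemma-candidate-hyperplanes}, together with the constraint-point machinery of Lemmas~\ref{lemma1_sep_hyper}--\ref{lemma2_sep_hyper} and the doubling property $uv\in E\iff u'v'\in E$ of $G$.

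For the ``if'' direction, suppose $\{u_1,\ldots,u_k\}$ is an independent set of $G$. I will take $h$ to be a small inward translate of the boundary hyperplane $H(u_1,\ldots,u_k)$ and set $h'=-h\in\mathcal{H}(u'_1,\ldots,u'_k)$. By Lemma~\ref{lemma-candidate-hyperplanes}, the pair $(h,h')$ already separates $P$ from $Q^0$, so it remains to check that every constraint point sits in the strip between $h$ and $h'$. Applying Lemma~\ref{lemma2_sep_hyper} to both of the classes $(u_1,\ldots,u_k)$ and $(u'_1,\ldots,u'_k)$ reduces this to two forbidden equalities per constraint $q_{ij}^{uv}$, namely $(u_i,u_j)\neq(u,v)$ and $(u_i,u_j)\neq(u',v')$. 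Distinctness of the $u_i$'s rules these out for the $q_{ij}^{uu}$ constraints; independence rules out $(u_i,u_j)=(u,v)$ for the edge constraints $uv\in E$; and the doubling property upgrades this to $(u_i,u_j)\neq(u',v')$ as well. The sphere-touching uniqueness in the proof of Lemma~\ref{lemma2_sep_hyper} yields strictness, so a sufficiently small translation preserves separation by continuity, and $(h,h')$ realizes a separation of the form~(a).

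For the ``only if'' direction, given a separating pair $(h_1,h_2)$, restrict attention to $Q^0\subseteq Q$ and invoke Lemma~\ref{lemma-candidate-hyperplanes} to obtain a tuple with $h_1\in\mathcal{H}(u_1,\ldots,u_k)$ and $h_2\in\mathcal{H}(u'_1,\ldots,u'_k)$. I will show $\{u_1,\ldots,u_k\}$ is an independent set of $G$ by contradiction: if $u_i=u_j=u$, or $u_iu_j=uv\in E$, then the constraint $q_{ij}^{u_iu_j}\in Q$. The four points $p_{iu_i},p_{i\bar u_i},p_{ju_j},p_{j\bar u_j}$ all belong to $P(u_1,\ldots,u_k)$, so they lie strictly on the far side of $h_1$; a short index computation using $u'_i=u_i+n/2$ and $\bar u'_i=u_i+1$ shows these four indices are not among the $n/2$ consecutive segments intersected by $h_2$, so they also sit on the near side of $h_2$. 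Their centroid $q_{ij}^{u_iu_j}$ inherits both, placing a point of $Q$ in the same quarter as four points of $P$, in violation of~(a).

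The main obstacle I expect is this last quarter-by-quarter bookkeeping: Lemma~\ref{lemma1_sep_hyper} alone only asserts that no single hyperplane in $\mathcal{H}(u_1,\ldots,u_k)$ separates $q_{ij}^{uv}$ from $P(u_1,\ldots,u_k)$, which is not immediately fatal for condition~(a) because $h_2$ could in principle rescue things. Turning it into a genuine contradiction requires tracking both hyperplanes $h_1$ and $h_2$ simultaneously and locating the four defining $P$-points and the centroid in a common quarter, which is precisely where the antipodal index relations $u'_i=u_i+n/2$ and the doubling of $G$ do the work for both directions at once.
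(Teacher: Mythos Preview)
Your proof is correct and uses the same ingredients as the paper (Lemma~\ref{lemma-candidate-hyperplanes} to pin down the tuple, Lemmas~\ref{lemma1_sep_hyper}--\ref{lemma2_sep_hyper} plus the doubling $uv\in E\iff u'v'\in E$ to handle the constraints). The paper itself states this lemma without an explicit proof, treating it as a direct consequence of the preceding discussion; your ``only if'' direction is in fact more careful than the paper's sketch, since you explicitly locate the four points $p_{iu_i},p_{i\bar u_i},p_{ju_j},p_{j\bar u_j}$ and their centroid in a single quarter rather than appealing to Lemma~\ref{lemma1_sep_hyper} alone---a subtlety the paper glosses over when it asserts that separation of type~(a) forces separation of type~(b) for these instances.
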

From this lemma, and since this is an fpt-reduction, we conclude with the following:
\begin{theorem}
Deciding whether two point sets $P, Q$ in $\Rd$ can be separated by two hyperplanes is \textup{W[1]}-hard with respect to $d$.
\end{theorem}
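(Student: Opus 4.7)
The plan is to derive the theorem directly from the preceding lemma by checking that the construction of $P$ and $Q$ constitutes an fpt-reduction from $k$-independent set (which is W[1]-hard~\cite{DF99}) to the two-hyperplane separation problem parameterized by the dimension~$d$. The equivalence between the two instances is precisely the content of the preceding lemma, so only the structural properties of an fpt-reduction remain to be verified.

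First I would observe that the target parameter is bounded by a function of the source parameter: every point of $P \cup Q$ lies in $\R^{2k}$, so $d = 2k$. This linear dependence is also what will eventually yield the $n^{\Omega(d)}$ ETH lower bound advertised in the abstract. Second, the reduction runs in fpt time: the output consists of $|P| = nk$ scaffolding points together with $|Q| = 1 + n\binom{k}{2} + 2|E_0|\binom{k}{2}$ constraint points, each specified by an explicit closed-form formula involving $\cos$ and $\sin$ of rational multiples of $\pi/n$, so the total number of points and the time needed to write them down are polynomial in $|V(G_0)|$ and~$k$.

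The one potentially delicate step, and the one I would flag as the main obstacle, is that the construction uses irrational coordinates whereas the Turing machine model demands finite precision. As indicated in the ``Model of computation'' paragraph of Section~2, this is handled by the fixed-precision rounding procedure of~\cite{CGKR08,cgkmr-gcfpt-09}: the geometric facts underlying Lemmas~\ref{lemma1_sep_hyper} and~\ref{lemma2_sep_hyper} --- notably the tangency of $F_{ij}^{u_i u_j}$ to the sphere $S_{ij}$ exactly at $q_{ij}^{u_i u_j}$ and the strict separation at every other $q_{ij}^{uv}$ --- are all stable under perturbations of magnitude inverse-polynomial in~$n$, so polynomially many bits of precision suffice to preserve the equivalence. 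Combined with the preceding lemma, this gives the claimed W[1]-hardness of the two-hyperplane separation problem parameterized by~$d$, and moreover, because $d$ depends linearly on $k$, transfers the $n^{\Omega(k)}$ ETH-based lower bound for $k$-clique to an $n^{\Omega(d)}$ lower bound for the separation problem.
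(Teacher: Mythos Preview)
Your proposal is correct and follows precisely the paper's approach: the paper derives the theorem from the preceding lemma with the single phrase ``From this lemma, and since this is an fpt-reduction, we conclude with the following'', so your argument is in fact more explicit than the original in spelling out the parameter bound $d=2k$, the polynomial output size, and the rounding issue. One minor slip: the number of edge constraint points is $2|E|\binom{k}{2}$ for the \emph{doubled} graph $G$ (with $|E|=4|E_0|$), not $2|E_0|\binom{k}{2}$, but this does not affect the polynomial bound or the reduction's validity.
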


\paragraph{Remark.}
The construction above depends on requiring
strict separation, i.\,e., the separating hyperplanes are not allowed to go through the given points.
For the fixed-precision approximation that is necessary 
to make the reduction suitable for a Turing machine,
we have to move the constraint points
$q_{ij}^{uv}$ a little bit further away from the center
before rounding them to rational coordinates.
The statement of Lemma~\ref{lemma1_sep_hyper} is refined
and excludes the possibility of separating
$P(u_1,\ldots ,u_k)$ from 
the set $\{o,q_{ij}^{uv}\}$ rather than from the point
$q_{ij}^{uv}$ alone.

These modifications are also suitable for the version of the problem
where \emph{weak separation} is allowed, i.\,e., points on the separation
boundary can be from $P$ or $Q$ arbitrarily. In this case
$\binom{2k}2$ additional points on the coordinate planes close to the
origin must be added to $Q^0$, in order to eliminate the coordinate
hyperplanes as potential separating hyperplanes.

\section{Maximum-size feasible subsystem}\label{max_feasible}

We first consider the special problem: Given a system of linear equations 
find a solution that satisfies as many equations as possible. 
(Note that this problem is dual to the problem of covering as many points as possible by a 
hyperplane through the origin.)
The decision version of this problem is as follows: Given a set of $n$ hyperplanes in $\Rd$ and an integer~$l$,
decide whether there exists a point in $\Rd$ that is covered by at least $l$ of the hyperplanes. 

In the following, $\vx = (x_1,\dots,x_k)\in\R^k$ denotes a $k$-dimensional
vector (a notation that is slightly different from the one used in the previous sections).
We identify the grid $[n]^k$ with the set of vectors in $\R^k$ with integer coordinates in $[n]$.

For a set $\HH$ of hyperplanes in $\R^k$ and a point $\vx \in\R^k$ we define
$$\depth(\vx,\HH) = |\{h\in \HH \mid \vx\in h\}|.$$

Given an undirected graph $G([n], E)$ and $k \in\N$, we will now construct 
a set $\HH_{G,k}$ of $nk+2|E|\binom{k}{2}$ 
hyperplanes in $\R^k$ such that $G$ has a clique
of size $k$ if and only if
   there is a point $\vx\in\R^k$ with
 $\depth(\vx, \HH_{G,k})=k+\binom{k}{2}$.

For $1 \le i \le k$ and $1 \le v \le n$ we define the hyperplane
$h^v_{i} = \{\vx \mid x_i=v\}$. The scaffolding hyperplane set $ \HH^0 = \{\,h^v_{i} \mid
1 \le i \le k,\ 1 \le v \le n\,\}$ consists of $nk$ hyperplanes.
 Any point $\vx$ is contained in at most $k$ hyperplanes in $\HH^0$;
equality is realized for the points in $[n]^k$:

\begin{lemma}\label{lem:depth_var}
$\depth(\vx,\HH^0)\leq k$ for any $\vx\in\R^k$, 
and $\depth(\vx,\HH^0)=k$ if and only if $\vx\in [n]^k$.
\end{lemma}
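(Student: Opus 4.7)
The plan is to partition the scaffolding $\HH^0$ according to the coordinate axis $i$ and count contributions independently. For each fixed $i \in \{1,\ldots,k\}$, let $\HH^0_i = \{h^1_i,\ldots,h^n_i\}$; these $n$ hyperplanes are all orthogonal to the $i$-th coordinate axis and are translates of one another at distinct levels $x_i = 1, 2, \ldots, n$. In particular they are pairwise disjoint, so any point $\vx \in \R^k$ lies on at most one element of $\HH^0_i$, and it lies on exactly one if and only if $x_i \in \{1,\ldots,n\}$.

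Summing this observation over the $k$ groups (which partition $\HH^0$) immediately yields $\depth(\vx,\HH^0) \le k$ for every $\vx \in \R^k$. For the equality condition, $\depth(\vx,\HH^0) = k$ forces the contribution of each group $\HH^0_i$ to be exactly $1$, which by the previous paragraph is equivalent to $x_i \in [n]$ for every $i$, i.e., $\vx \in [n]^k$. Conversely, if $\vx \in [n]^k$, then each coordinate $x_i$ hits exactly one hyperplane $h^{x_i}_i \in \HH^0_i$, and summing gives depth exactly $k$.

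There is no genuine obstacle here: the statement is essentially a bookkeeping observation about the fact that the scaffolding is a grid of $k$ parallel pencils of axis-aligned hyperplanes, one pencil per coordinate direction. The only thing to be careful about is to make explicit that the pencils are disjoint (not merely parallel) so that a single point contributes at most one incidence per coordinate, and that the $k$ contributions combine additively because the pencils partition $\HH^0$.
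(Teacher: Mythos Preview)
Your proof is correct and is exactly the straightforward counting argument that the paper has in mind; in fact the paper does not spell out a proof of this lemma at all, treating it as an immediate observation from the definition of $\HH^0$. Your decomposition into the $k$ pencils $\HH^0_i$ and the per-coordinate count is the intended reasoning.
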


For $1 \le i <j \le k$ and $1 \le u, v \le n$ we define the hyperplane
$h_{ij}^{uv} = \{\,\vec x \mid (x_i-u)+n(x_j-v)=0 \,\}$. 
This hyperplane contains only those points $\vx$ of the grid for which $x_i=u$ and $x_j=v$:

\begin{lemma}\label{lem:1}
  $\vx \in h^{uv}_{ij} \cap [n]^k$ if and only if $x_i=u$ and $x_j=v$.
\end{lemma}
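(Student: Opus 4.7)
The backward direction is immediate: substituting $x_i=u$ and $x_j=v$ into $(x_i-u)+n(x_j-v)$ yields $0$, so such an $\vx$ lies in $h_{ij}^{uv}$, and if it is in $[n]^k$ then it is in $h_{ij}^{uv}\cap[n]^k$.

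For the forward direction, the plan is to exploit the range restriction imposed by $\vx\in[n]^k$. Since $x_i,x_j\in[n]=\{1,\ldots,n\}$ and $u,v\in[n]$, both differences $x_i-u$ and $x_j-v$ lie in the range $\{-(n-1),\ldots,n-1\}$. The equation $(x_i-u)=-n(x_j-v)$ then forces $|n(x_j-v)|\le n-1$, which (since $x_j-v$ is an integer) is only possible when $x_j-v=0$. Substituting back gives $x_i-u=0$.

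I would present this essentially as a one-sentence reduction: the coefficient $n$ is larger than the maximum magnitude that $x_i-u$ can attain on the grid, so the only way the linear combination $(x_i-u)+n(x_j-v)$ can vanish with integer entries in $[n]$ is for each term to vanish individually. There is no real obstacle here — the only thing to double-check is the choice of the coefficient $n$, which is indeed precisely the smallest integer that strictly exceeds $\max|x_i-u|=n-1$, so the argument is tight but valid.
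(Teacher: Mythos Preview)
Your proof is correct and essentially the same as the paper's: both exploit that $|x_i-u|\le n-1$ when $x_i,u\in[n]$, so the two terms in $(x_i-u)+n(x_j-v)$ cannot cancel unless each vanishes. The only cosmetic difference is that the paper phrases this as a divisibility argument (if $x_i\ne u$ the left-hand side is not divisible by $n$, hence nonzero) and deduces $x_i=u$ first, whereas you phrase it as a magnitude bound and deduce $x_j=v$ first.
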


\begin{proof}
Assume  $\vx \in h^{uv}_{ij} \cap [n]^k$, i.\,e. 
 $ (x_i-u)+n(x_j-v)=0 $ and $x_i,x_j\in[n]$.
If $x_i\ne u$, the left-hand side of the equation
is not divisible by $n$ and thus cannot 
be~0. Therefore, $x_i=u$ and thus, $x_j=v$. The other direction is obvious.
\end{proof}



For $1 \le i < j \le k$ we define the set
$ \HH^{E}_{ij} = \{\,h^{uv}_{ij} \mid uv\in E\; \mathrm{or}\; vu\in E\,\}$ 
of $2|E|$ hyperplanes. All these hyperplanes are parallel; 
thus a point is contained in at most one hyperplane of $\HH^{E}_{ij}$.
By Lemma~\ref{lem:1}, a point $\vx\in [n]^k$ is contained in a hyperplane of $\HH^{E}_{ij}$ 
if and only if $x_ix_j$ is an edge of $E$.

We define the set $ \HH^{E} = \bigcup_{ 1 \le i < j \le k} \HH^{E}_{ij}$
consisting of $2|E|\binom{k}{2}$ hyperplanes. From the above, we have the following facts:


\begin{lemma}\label{lem:depth_HHE}
\begin{itemize}\addtolength{\itemsep}{-0.5\baselineskip}
\item[\rm(a)] $\depth(\vx,\HH^E)\leq \binom{k}{2}$ for any $\vx\in\R^k$.
\item[\rm(b)] Let $\vx \in [n]^k$. Then $ \depth(\vx,\HH^E) =  |\{\, (i,j) \mid 1 \le i < j \le k,\ x_ix_j \in E \,\}|$ 
\item[\rm(c)] Let $\vx \in [n]^k$. Then $ \depth(\vx,\HH^E) = \binom{k}{2}$ iff $\{x_1,\dots,x_k\}$ is a $k$-clique in $G$.
\end{itemize}
\end{lemma}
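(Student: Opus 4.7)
The plan is to prove the three parts in order, each being a short consequence of Lemma~\ref{lem:1} together with a simple observation about the defining equations of the $h_{ij}^{uv}$. For part~(a), I would first note that the equation defining $h_{ij}^{uv}$, namely $(x_i-u)+n(x_j-v)=0$, has a normal vector that depends only on the pair $(i,j)$: entries $1$ at position $i$, $n$ at position $j$, and $0$ elsewhere. Hence all hyperplanes in $\HH^{E}_{ij}$ are mutually parallel (and in fact pairwise distinct, since $u+nv=u'+nv'$ with $u,u'\in[n]$ forces $v=v'$ and then $u=u'$). Consequently any $\vx\in\R^k$ lies in at most one hyperplane of $\HH^{E}_{ij}$, and summing over the $\binom{k}{2}$ index pairs $i<j$ yields the bound.

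For part~(b) I would fix $\vx\in[n]^k$ and $i<j$, and invoke Lemma~\ref{lem:1}: the grid point $\vx$ lies on $h_{ij}^{uv}$ precisely when $u=x_i$ and $v=x_j$. Thus $\vx$ meets some hyperplane of $\HH^{E}_{ij}$ iff the specific hyperplane $h_{ij}^{x_ix_j}$ was included in $\HH^{E}_{ij}$, which by the definition (where both orderings of an edge are allowed) happens exactly when $x_ix_j\in E$. Summing this indicator over all pairs $i<j$ gives the stated count.

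Part~(c) will follow by combining (b) with the standard convention that $G$ has no self-loops, so that $x_ix_j\in E$ forces $x_i\neq x_j$. If $\depth(\vx,\HH^E)=\binom{k}{2}$, then by (b) every pair $(i,j)$ with $i<j$ satisfies $x_ix_j\in E$; this forces the $x_i$ to be pairwise distinct and $\{x_1,\dots,x_k\}$ to induce a complete subgraph on $k$ vertices. The converse is immediate: if $\{x_1,\dots,x_k\}$ is a $k$-clique, then $x_ix_j\in E$ for every $i<j$, and (b) gives depth $\binom{k}{2}$.

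I do not anticipate a real obstacle here; the lemma is essentially a bookkeeping step whose content is already packaged by Lemma~\ref{lem:1} and the parallelism of each $\HH^{E}_{ij}$. The one bit of care required is in part~(c), where the conclusion that $\{x_1,\dots,x_k\}$ is a genuine \emph{set} of $k$ distinct vertices (rather than a multiset of size less than $k$) relies on $G$ being loopless, so I would state this convention explicitly before running the argument.
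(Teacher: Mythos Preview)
Your proposal is correct and follows essentially the same argument the paper sketches just before the lemma: parallelism within each $\HH^{E}_{ij}$ gives part~(a), Lemma~\ref{lem:1} gives part~(b), and part~(c) is immediate from~(b). Your explicit remark that looplessness of $G$ is needed in~(c) to force the $x_i$ to be distinct is a welcome clarification the paper leaves implicit.
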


For the set $\HH_{G,k} = \HH^0 \cup \HH^{E}$,
 Lemmas~\ref{lem:depth_var} and~\ref{lem:depth_HHE}
 immediately imply:

\begin{lemma}\label{lem:main}
  $ \depth(\vx,\HH_{G,k}) = k+\binom{k}{2}$ if and only if $\vx \in [n]^k$ and $\{x_1,\dots,x_k\}$ is a
  $k$-clique in $G$.
\end{lemma}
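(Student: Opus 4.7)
The plan is to observe that $\depth(\cdot,\HH_{G,k})$ decomposes as a sum over the two families making up $\HH_{G,k}$, apply the upper bounds from the preceding two lemmas, and then use the characterizations of equality.

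First I would verify the technical point that the two families are disjoint, so that depths genuinely add. A hyperplane in $\HH^0$ is of the form $x_i=v$, whose defining linear form has a single nonzero coefficient. A hyperplane in $\HH^E$ has defining equation $(x_i-u)+n(x_j-v)=0$ with $i\neq j$, whose linear form has two nonzero coefficients. Hence $\HH^0\cap\HH^E=\emptyset$, and for every $\vx\in\R^k$
$$\depth(\vx,\HH_{G,k})=\depth(\vx,\HH^0)+\depth(\vx,\HH^E).$$

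Next I would bound each term. By Lemma~\ref{lem:depth_var}, $\depth(\vx,\HH^0)\le k$, and by Lemma~\ref{lem:depth_HHE}(a), $\depth(\vx,\HH^E)\le\binom{k}{2}$. Adding these gives $\depth(\vx,\HH_{G,k})\le k+\binom{k}{2}$ for arbitrary $\vx$. Therefore the hypothesis $\depth(\vx,\HH_{G,k})=k+\binom{k}{2}$ forces both inequalities to hold with equality.

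Equality $\depth(\vx,\HH^0)=k$ forces $\vx\in[n]^k$ by the second half of Lemma~\ref{lem:depth_var}. Knowing now that $\vx$ is on the grid, equality $\depth(\vx,\HH^E)=\binom{k}{2}$ together with Lemma~\ref{lem:depth_HHE}(c) yields that $\{x_1,\dots,x_k\}$ is a $k$-clique in $G$. Conversely, if $\vx\in[n]^k$ and $\{x_1,\dots,x_k\}$ is a $k$-clique, then Lemma~\ref{lem:depth_var} gives $\depth(\vx,\HH^0)=k$ and Lemma~\ref{lem:depth_HHE}(c) gives $\depth(\vx,\HH^E)=\binom{k}{2}$, so the total depth is $k+\binom{k}{2}$. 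There is no real obstacle here; the only subtlety is confirming the disjointness of $\HH^0$ and $\HH^E$ so that the two earlier lemmas compose cleanly.
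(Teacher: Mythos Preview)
Your argument is correct and matches the paper's approach: the paper simply states that Lemmas~\ref{lem:depth_var} and~\ref{lem:depth_HHE} immediately imply the result, and you have spelled out exactly that implication, with the added (and useful) verification that $\HH^0$ and $\HH^E$ are disjoint so that the depths add.
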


Note that the above
construction of the set $\HH_{G,k}$
 is an fpt-reduction with respect to both the depth of the set of hyperplanes, i.\,e., 
the maximum number of hyperplanes covering any point, and the dimension. Hence, we have the following:
\begin{theorem}
Given a set of $n$ of linear equations on $d$ variables and an integer $l$, deciding whether 
there exists a solution that satisfies $l$ of the equations is \textup{W[1]}-hard with respect to both $l$ and $d$.
\end{theorem}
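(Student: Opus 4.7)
The plan is to observe that Lemma~\ref{lem:main} essentially packages the entire reduction, so the theorem follows by interpreting the construction as an fpt-reduction from $k$-clique. Each hyperplane $h \in \HH_{G,k}$ is the solution set of a single linear equation in the $k$ variables $x_1,\dots,x_k$, so ``$\vx$ lies in at least $l$ hyperplanes of $\HH_{G,k}$'' is exactly ``$\vx$ satisfies at least $l$ of the associated equations.'' I would set $d := k$ and $l := k+\binom{k}{2}$ and take the equation system to be the one defining $\HH_{G,k}$.

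Next I would combine Lemma~\ref{lem:depth_var} with Lemma~\ref{lem:depth_HHE}(a) to get the a~priori upper bound $\depth(\vx,\HH_{G,k}) \leq k + \binom{k}{2}$ for every $\vx\in\R^k$. Thus a point satisfies \emph{at least} $l$ equations iff it satisfies \emph{exactly} $l$ equations, and Lemma~\ref{lem:main} then gives that such a point exists iff $G$ contains a $k$-clique. This establishes the correctness of the reduction in both directions.

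To conclude W[1]-hardness with respect to the combined parameter $(l,d)$, I would verify the fpt-reduction conditions: the new parameters $d=k$ and $l = k+\binom{k}{2}$ depend only on the source parameter $k$, and the construction outputs $nk + 2|E|\binom{k}{2}$ equations whose coefficients are integers of magnitude at most $n$, so it runs in time polynomial in the size of $G$ and $k$. Since $k$-clique is W[1]-hard~\cite{DF99}, the claimed hardness result for the maximum feasible subsystem problem over equalities follows.

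There is essentially no technical obstacle here beyond what has already been done; the only point worth being careful about is that the problem as stated asks for \emph{at least} $l$ satisfied equations, whereas Lemma~\ref{lem:main} characterizes depth \emph{exactly} $k+\binom{k}{2}$. The upper bound coming from Lemmas~\ref{lem:depth_var} and~\ref{lem:depth_HHE}(a) is what reconciles the two formulations, and this is the one step I would make explicit in the write-up.
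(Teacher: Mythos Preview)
Your proposal is correct and follows the same route as the paper: the authors simply note that the construction of $\HH_{G,k}$ together with Lemma~\ref{lem:main} constitutes an fpt-reduction from $k$-clique, with both the dimension $d=k$ and the target depth $l=k+\binom{k}{2}$ bounded in terms of $k$. Your explicit reconciliation of ``at least $l$'' versus ``exactly $l$'' via the upper bound from Lemmas~\ref{lem:depth_var} and~\ref{lem:depth_HHE}(a) is a detail the paper leaves implicit, so if anything your write-up is slightly more careful.
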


Replacing each equation by $2$ inequalities, 
an instance of the above problem
is transformed into an instance with linear inequalities such that there exists a solution 
satisfying $l$ out of the $n$ equations of the original instance if and only if there exists a solution satisfying $n+l$ out of 
the $2n$ inequalities of the final instance; the number of variables stays the same. Hence, we have the following:

\begin{theorem}
Given a set of $n$ linear inequalities on $d$ variables and an integer $l$, deciding whether 
there exists a solution that satisfies $l$ of the inequalities is \textup{W[1]}-hard with respect to~$d$.
\end{theorem}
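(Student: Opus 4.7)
The plan is to give an fpt-reduction from the equalities version to the inequalities version, preserving the parameter $d$. Given an instance of the equalities problem consisting of $n$ linear equations $a_i^\top \vx = b_i$ on $d$ variables together with a target $l$, I replace each equation by the pair of inequalities $a_i^\top \vx \le b_i$ and $a_i^\top \vx \ge b_i$, producing a system of $2n$ inequalities on the same $d$ variables; the new target is set to $l' = n + l$.

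For the forward direction, take any $\vx^*$ satisfying at least $l$ of the original equations. Every satisfied equation contributes both of its derived inequalities at $\vx^*$, and every unsatisfied equation contributes exactly one of them (because at $\vx^*$ exactly one of the strict inequalities $<$ or $>$ holds). This gives a total of at least $2l + (n-l) = n+l$ satisfied inequalities, so $\vx^*$ witnesses the inequality instance.

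For the reverse direction, suppose $\vx^*$ satisfies at least $n+l$ of the new inequalities. The crucial observation is the per-pair bound: the two inequalities coming from a single equation are both satisfied at $\vx^*$ precisely when the equation itself is satisfied at $\vx^*$; otherwise at most one of them holds. Hence, if $m$ denotes the number of equations satisfied by $\vx^*$, the total number of satisfied inequalities is at most $2m + (n-m) = n+m$, which forces $m \ge l$.

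Since the construction runs in polynomial time and preserves the number of variables exactly, it is an fpt-reduction with respect to $d$. Combined with the W[1]-hardness of the equalities version from the preceding theorem, this yields the claimed W[1]-hardness of the inequalities problem with respect to $d$. The argument is essentially mechanical; the only step that warrants a moment's thought is the per-pair counting in the reverse direction, which I would isolate as a single short observation for clarity.
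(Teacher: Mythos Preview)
Your proposal is correct and follows exactly the paper's approach: the paper also replaces each equation by the two opposite inequalities, sets the new target to $n+l$, and notes that satisfying $l$ equations is equivalent to satisfying $n+l$ of the $2n$ inequalities, with the number of variables unchanged. Your write-up simply spells out the forward and reverse counting that the paper leaves implicit.
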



\end{document}